\def\E{{\mathbb E}}
\def\E{\mathbb E} 
\newcommand{\mb}{\mathbf}
\DeclareMathOperator*{\argmin}{arg\,min}
\newtheorem{proposition}{Proposition}
\newtheorem{heuristic}{Algorithm}
\begin{document}

\title{A State-Space Approach for Optimal Traffic Monitoring \\via Network Flow Sampling}

\numberofauthors{3} 
%
\author{
%
%
\alignauthor
Michael Kallitsis\\
       \affaddr{Merit Network, Inc.}\\
        \affaddr{Ann Arbor, MI}\\
       \email{mgkallit@merit.edu}
\alignauthor
Stilian Stoev\\
       \affaddr{Department of Statistics}\\
       \affaddr{University of Michigan}\\
        \affaddr{Ann Arbor, MI}\\
       \email{sstoev@umich.edu}
\alignauthor
George Michailidis\\
       \affaddr{Department of Statistics}\\
       \affaddr{University of Michigan}\\
        \affaddr{Ann Arbor, MI}\\
       \email{gmichail@umich.edu}       
}
\date{29 April 2013}

\maketitle
\begin{abstract}
The robustness and integrity of IP networks require efficient tools for traffic monitoring and analysis, which scale well with traffic volume and network size. We address the problem of optimal large-scale \emph{flow} monitoring of computer networks under resource constraints. We propose a stochastic optimization framework where traffic measurements are done by exploiting the spatial (across network links) and temporal relationship of traffic flows.
Specifically, given the network topology, the state-space characterization of network flows and sampling constraints at each monitoring station, we seek an optimal packet sampling strategy that yields the ``best" traffic volume estimation for all flows of the network. 
The optimal sampling design is the result of a concave minimization problem; then, Kalman filtering is employed to
yield a sequence of traffic estimates for each network flow. We evaluate our algorithm
using real-world Internet2 data.
\end{abstract}




\section{Introduction}

Advances in networking technologies and high performance computing have led to an unprecedented growth of a vast array of applications such as cloud computing, social networking, video on demand, cloud storage, and voice over IP, to name a few. At the same time, malicious network activity  remains a big concern since network attacks become more sophisticated. Therefore, it is extremely important for network operators to have an accurate global-view of their network for diagnosing anomalous activity~\cite{Lakhina:2004:DNT:1030194.1015492}, for optimal network capacity planning and  quality of service considerations~\cite{Claffy:1993:ASM:167954.166256}.  These can be achieved through network monitoring. However, monitoring everywhere and constantly is expensive, energy inefficient and computationally challenging. Thus, one should employ statistical tools for traffic estimation through limited collection of measurements.

Network monitoring has traditionally been  done with SNMP measurements~\cite{Claffy:1993:ASM:167954.166256, Papagiannaki:2004:DAM:1028788.1028808, Zhang:2003:IAT:863955.863990}. SNMP measurements provide link counts which give the aggregate traffic volume at the observation point of interest. Recently, more granularity can be achieved by performing flow-level measurements using tools such as Cisco's NetFlow. The latter approach simplifies the monitoring task significantly. The idea is to sample packets from flows of interests at specific router interfaces, henceforth called observation points. For each packet sampled, several header information can be extracted and recorded for further analysis.  Each packet from a flow (a flow can be an aggregate flow, i.e., flows originating from a particular subnet or an autonomous system) is sampled independently with a particular \emph{sampling probability} (also known as \emph{sampling rate}). Typical sampling rates are between $0.01$ (i.e., only 1 out of 100 packets is selected for sampling) and $0.20$. Higher sampling rates can also be chosen, but they amount to valuable resource consumption at each router (cache memory, CPU cycles, storage, network bandwidth and power). Thus, judicious choice of the sampling rates greatly affects the efficient operation of the network. 

Regardless of the measurement technique, network monitoring aims to several objectives: a) identification of the traffic volume for network flows (known as \emph{traffic matrix})~\cite{Singhal:2008:OSS:1375457.1375474, Medina:2002:TME:633025.633041, Papagiannaki:2004:DAM:1028788.1028808, Zhang:2003:IAT:863955.863990,   Soule:2005:TMB:1064212.1064259, 1424313}, i.e., traffic for each origin-destination pair given the link counts and the topology of the network, b) identification of flow characteristics such as end-to-end network delay~\cite{4016134, 5873177, Coates:2007:CNM:1298306.1298340}, flow length, flow distribution or other flow statistics~\cite{4215789, Duffield02propertiesand, Duffield:2005:EFD:1103543.1103544}. To accomplish these, several interesting problems arise, such as the \emph{subset selection} problem for choosing the locations of the monitoring stations~\cite{4016134, Coates:2007:CNM:1298306.1298340} and the sampling design problem~\cite{Singhal:2008:OSS:1375457.1375474}.

This paper aims to address the problem of flow estimation through an optimal sampling strategy under resource constraints (see also~\cite{Duffield:2004:FSU:1005686.1005699}). 
We assume that network monitoring is performed via Netflow-alike measurements.
Our framework takes into account both temporal correlations of the flows (see also~\cite{1424313}) as well as their spatial correlation~\cite{Duffield:2005:OCS:1251086.1251094}.
We adapt  Bernoulli sampling for our measurements at each observation site (that is, the information of a packet at each network link is recorded according to the sampling rate/probability described above), but alternative sampling techniques
can aslo be considered~\cite{Cohen:2007:AEA:1298306.1298344, Cohen:2012:DLN:2254756.2254798, Duffield:2012:FSA:2318857.2254800}.

The toy-example of Figure~\ref{fig:Internet2} provides more insights on the proposed method;  assume we have flows between each network node.
Further, assume  flow-monitoring tools can sample with rate $20$ out of every $100$ packets.
How should the network operator assign the sampling rates to each flow subject to the sampling capacity of each link?
By considering the network topology, one would expect that ``long-flows"
do not require many samples at every single link they traverse. For example, the flow from 'Houston' to 'NY' needs not be sampled at every link on its path. 
Sampling on link 'Houston' - 'Kansas' may be sufficient; this will leave the resources of
the intermediate link
to be utilized for monitoring the ``short" flow that traverses \emph{only} the link 'Kansas' to 'Chicago'.
Similarly, a stochastic characterization of the ``evolution"  of each flow over time provides valuable information for choosing the ``'best" sampling strategy. Section~\ref{sec:description}
unifies these ideas in a stochastic optimization framework.

Similar approaches for flow estimation via state-space models have also appeared in~\cite{Singhal:2008:OSS:1375457.1375474, Soule:2005:TMB:1064212.1064259}. 
However, in~\cite{Singhal:2008:OSS:1375457.1375474} the proposed method addresses flow estimation on a \emph{single} network link
and the spatial correlation between flows are not examined. In~\cite{Soule:2005:TMB:1064212.1064259}, the suggested state-space model
considers link counts at \emph{every} network link without relying on flow sampling (in particular, SNMP counts are considered). In large-scale networks this approach is not practical nor feasible.
Further, compared to~\cite{Soule:2005:TMB:1064212.1064259}, we study a framework better-tailored to the ongoing measurement process.

The contributions of this paper are twofold: 
a) We present a stochastic optimization framework for finding the optimal sampling design that would yield the best traffic estimates for each flow (section~\ref{sec:description}).
The model views each flow as a stochastic process.  The \emph{state} of the system at a particular time instance is the volume of traffic that each flow carries at that instance.
Through sampling, we get a \emph{partially observed} system; this observation uncertainty is captured through the measurement equations we define next.
The goal is to find the ``best" sampling strategy that minimizes the estimation error over the (finite) horizon of interest. 
This is the first attempt to model the flow estimation problem under a stochastic control framework;
b) We study an approximation scheme for the solution of the above-mentioned stochastic optimization problem (section~\ref{sec:solution}). The problem of obtaining the optimal sampling rates is then reduced to a \emph{deterministic} optimization problem that can be solved \emph{a priori}. Based on the calculated sampling rates, traffic estimation for each time-step is then performed via the Kalman filter.
As illustrated in Figure~\ref{fig:opt_vs_naive1} the proposed approach poses significant gains over existing techniques.
We evaluate our approach using real-world data obtained from Internet2 (section~\ref{sec:peva:sampling}).

\begin{figure}[tp]
\centering
\includegraphics*[bb=155 60 835 530, scale=0.30]{./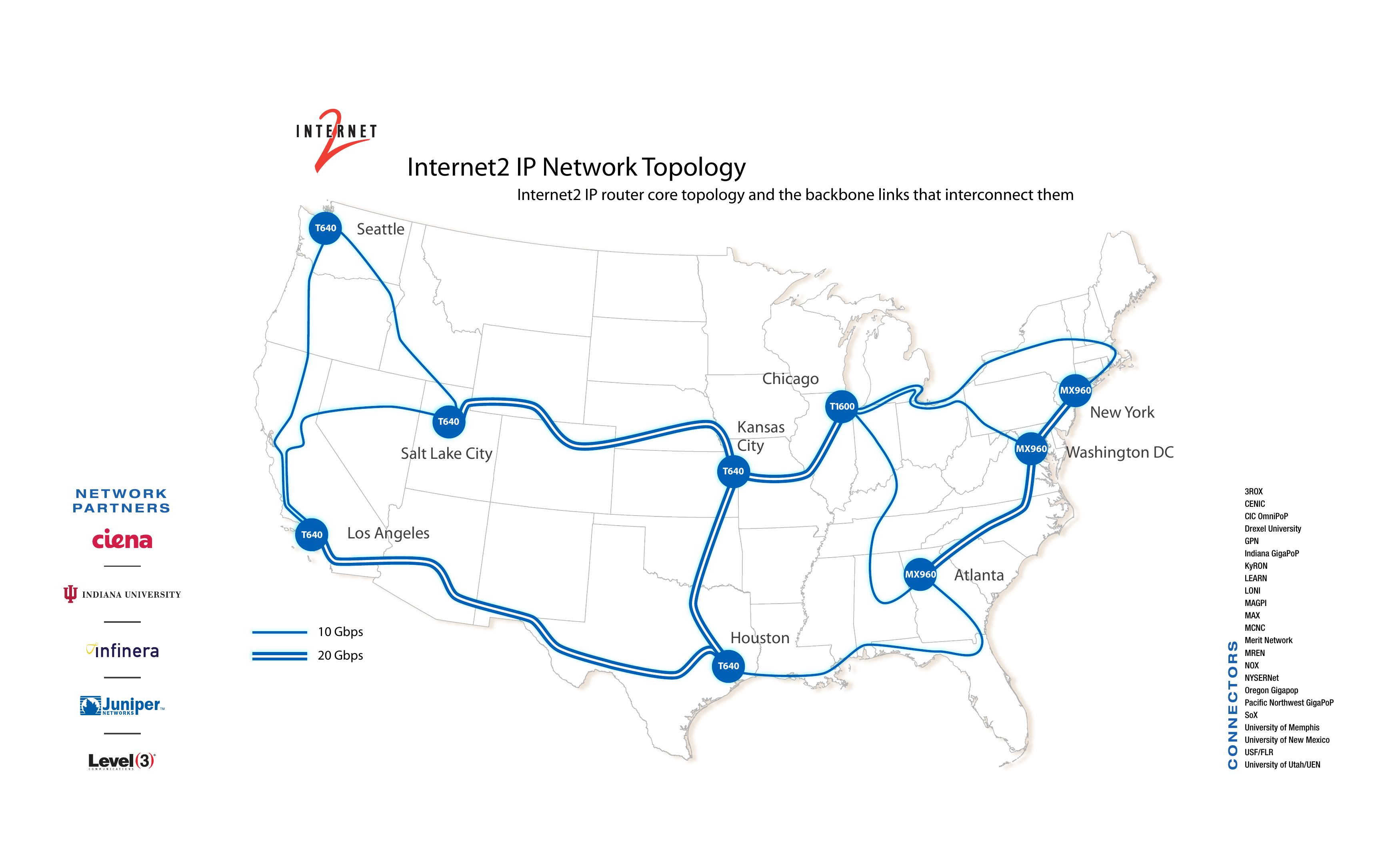}  
\caption{Backbone network of Internet2.}
\label{fig:Internet2}
\end{figure}

\section{Problem Description}
\label{sec:description}

Consider a communication network
of $N$ nodes and $L$ links. The total number of traffic flows, i.e. source and 
destination $(S,D)$ pairs, is denoted by $J$.  
We denote the set of flows as $\mathcal{J} := \{1,2,\ldots, J\}$ and the set of all network links with $\mathcal{L}:=\{1, 2,\ldots, L\}$. 
Traffic is routed over the network along predefined paths described 
by a routing matrix $R = (r_{\ell, j})_{L\times J},$ with $r_{\ell,j} =  1, \mbox{when route $j$ uses link $\ell$ }$ and  $0$ otherwise.   
 Let  $$\mb{x}_t= (x_t(1), x_t(2), \ldots, x_t(J))^T , t=1,2,\ldots$$ 
and $$\mb{y}_t= ( y_t(1), y_t(2) , \ldots, y_t(L)  )^T, t=1,2,\ldots$$
 be the vector time series\footnote{Here, time is discrete and traffic loads 
are measured in bytes or packets per unit time, over a time scale greater than the round-trip time of the network.}
of traffic traversing all $J$ routes and $L$ links, respectively. We shall ignore network delays and adopt the 
assumption of \emph{instantaneous propagation}. This is reasonable when traffic is monitored at 
a time-scale coarser than the round-trip time of the network, which is the case in our setting.
We thus obtain that the link and route level traffic are related through the fundamental {\em routing equation}\footnote{
Note that an in backbone IP networks the routing matrix $R$ does not change often.}
\begin{equation}\label{e:routing} 
\mb{y}_t = R \mb{x}_t.
\end{equation} 

The spatial correlation between flows encoded in the routing matrix $R$, will play an important role in determining
the optimal sampling design. Before discussing the solution of our optimization problem though, we first consider in detail
all the components of our stochastic control formulation.

\begin{figure}[tp]
\centering
\includegraphics*[scale=0.4]{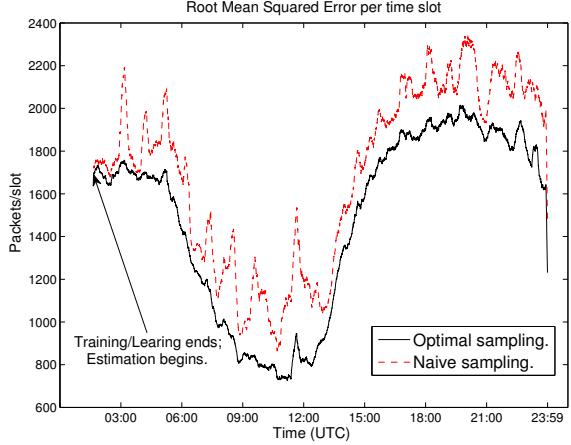}   
\caption{Estimation error per time interval: comparison of optimal  versus na\"ive sampling for the $72$ Internet2 flows on 2009-03-17. }
\label{fig:opt_vs_naive1}
\end{figure}


\subsection{A State-Space Model}

In this paper, we model the evolution over time of the volume of each flow as a stochastic process. In particular,
we model the dynamics of each flow $j, j=1,2,\ldots, J$, as the following Markov process:

\begin{equation}
\label{eq:stateeq}
 x_{t+1}(j) = \rho_j x_t(j) + w_t(j), t=0, 1,2, \ldots
\end{equation}

where $x_t(j)$ represents the state of flow $j$ at time $t$, namely the numbers of packets (or bytes)  carried at time interval $t$. For the purposes of this
paper we assume that each time interval has a duration of $10$ minutes. The sequence of random variables $w_t(j), t=1, 2, \ldots$ represent random noise.
They belong to the set of \emph{primitive random variables}, meaning that they are mutually independent. They are also independent from the
state random variables. We assume   noise   to be Gaussian with zero mean and variance equal to $\sigma_{\epsilon}^2$.
To fully characterize the system
evolution  for flow $j$ of Eq.~\eqref{eq:stateeq} we need   initial state $x_0(j)$, which is also assumed to be
Gaussian. Its mean and variance can be calculated during a calibration phase. To summarize, for $t=0, 1,2, \ldots$, the \emph{system dynamics} are described by
\begin{equation}
 \mb x_{t+1} = F \mb x_t + \mb w_t, 
\end{equation}
with $\mb x_t$ being the vector representing the ``state" of each flow at time $t$, and $F$ a diagonal matrix of the coefficients $\rho_j$. Moreover,
we have the following  probability density functions for the primitive random variables described above, i.e.,
\begin{align}
 p(\mb x_0) & = c_1 {\rm exp} \{   -\frac{1}{2} [  (\mb x_0 - \mb {\bar x_0} )^T ( \hat{P}_{0|-1}  )^{-1} (\mb x_0 - \mb {\bar x_0} )  ]  \}  \label{eq:prior:state}\\
 p(\mb w_t) & = c_2 {\rm exp} \{   -\frac{1}{2} [  \mb w_t^T  \hat{Q}^{-1}  \mb w_t  ] \}.  
\end{align}
The parameters $\mb x_0$, $\hat{P}_{0|-1}$, $F$ and ${\hat{Q}}$ can be determined through a short calibration phase
using techniques for fitting autoregressive models (see~\cite{Brockwell:1986:TST:17326}, Chapter 8).

\subsection{Traffic Measurement}

\subsubsection{Bernoulli flow sampling}

As mentioned above, at time $t$ the volume
of flow $j$ is denoted by the state variable $x_t(j)$. 
We adopt a Bernoulli sampling scheme~\cite{Duffield:2005:EFD:1103543.1103544}. This says that each packet of flow $j$ passing through 
the observation point $\ell$ at time $t$, is sampled with probability $u_t (\ell, j)$. 
In other words, the variable $u_t(\ell, j)$ specifies the sampling rate
at link $\ell$ for flow $j$ at  time $t$. 

The number of packets captured at observation point $\ell$ for
flow $j$ is given by the random variable $\tilde{y}_t(\ell, j)$. Given $x_t(j)$, $\tilde{y}_t(\ell, j)$ follows  a binomial distribution, i.e.,
\begin{equation}
\label{eq:bernoulli}
 \tilde{y}_t(\ell, j) \sim  \rm{Binomial}(x_t(j), u_t (\ell, j)).
\end{equation}
Based on the observations $\tilde{y}_t(\ell, j)$, the unbiased estimator for $x_t(\ell, j)$ -- the traffic volume at link $\ell$ for flow $j$ -- is given by:
\begin{equation}
\label{eq:link:est}
 \tilde{z}_{t}(\ell, j) = \frac{\tilde y_t(\ell ,j)}{u_t (\ell, j)} .
\end{equation}

The variance of the estimator at link $\ell$ for flow $j$ equals
\begin{align}
 \label{eq:var_ij}
  v_t(\ell, j) := &\E  [  (\tilde{z}_{t}(\ell, j) - \E \tilde{z}_{t}(\ell, j) )^2 | { x_t (j)} ] \notag\\
   = &\E [  (\tilde{z}_{t}(\ell, j) -  x_t(j) )^2  | x_t(j)] \notag\\
   = &x_t(j)   \frac{ 1 - u_t (\ell, j)} {u_t (\ell, j)} .
\end{align}

\subsubsection{Spatial combination of estimators}

We seek a combined estimator for the volume of flow $j$ that uses measurements from several observation points~\cite{Duffield:2005:OCS:1251086.1251094}.
Such an estimator can be expressed as,
\begin{equation}
\label{eq:partial:obs}
 z_t (j) = \sum_{\ell \in \ell(j)} w_{\ell, j} \tilde{z}_{t}(\ell, j),
\end{equation}
where $\ell(j)\subseteq \mathcal{L}$ is the set of links that flow $j$ traverses. 
Conditioned on the state $x_t(j)$, the observations at different links $\tilde{y}_{t}(\ell, j)$ are independent so one can calculate the variance
of the combined estimator to be
\begin{equation}
 \label{eq:varXgiven_xi}
 \rm{Var}( z_t (j) | x_t (j) ) = \sum_{\ell \in \ell(j)} {w^2_{\ell, j}}  v_t (\ell, j),
\end{equation}
where $v_t(\ell, j) =  x_t(j)   \frac{ 1 - u_t (\ell, j)} {u_t (\ell, j)} $ (see Eq.~\eqref{eq:var_ij}) and $\ell(j)$ is the set of links that flow $j$ is traversing and can be acquired from the routing matrix $R$. 
To obtain the \emph{best linear unbiased estimator} for all $j\in\mathcal{J}$, 
we want to find the optimal weights $w_{\ell, j}$ that 
minimize the above variance subject to
$\sum_{\ell \in\ell(j)} w_{\ell, j} =1$. Taking the Lagrangian and using the first-order optimality conditions we arrive to
\begin{equation}
\label{eq:weights}
 w_{\ell, j} = \frac{v_t(\ell, j)^{-1}}{\sum_{k\in \ell(j)} v_t (k, j)^{-1}}.
\end{equation}
Continuing from~\eqref{eq:varXgiven_xi},
\begin{equation}
 \label{eq:variances_Xi}
 \rm{Var}( z_t (j) | x_t (j) ) = \frac{1}{\sum_{k\in \ell(j)} \frac{1} { v_t (k, j)}  }.
\end{equation}

\subsubsection{The Measurement Equation}

From Eqs.~\eqref{eq:bernoulli},~\eqref{eq:link:est},~\eqref{eq:partial:obs} we see that we have a partially observable system. In other words,
the state $\mb x_t$ of the system -- the traffic volume for each flow at $t$ -- is not directly available, but can be inferred 
through the \emph{observations} $\mb z_t$. Using the normal approximation to the binomial distribution
 we get the following relation between the state and observations for flow $j$, for $t=1,2, \ldots$
\begin{equation}
\label{eq:measu:eq:flowj}
 z_t (j) = x_t(j) + r_t(j),
\end{equation}
where $r_t(j)$ is a Gaussian random variable, i.e. (see Eq.~\eqref{eq:variances_Xi})
\begin{align}
\label{eq:measu:gauss}
 r_t(j) 
          &\sim \mathcal{N}\Big(0, \frac{x_t(j)}{\sum_{k\in \ell(j)} \frac{1} { \frac{1}{u_t (k, j)} - 1 }  }\Big)
\end{align}

The measurement equation, for all flows becomes
\begin{equation}
\mb z_t = \mb x_t + \mb r_t,
\end{equation}
with the probability density function for $\mb r_t$ being
\begin{align}
 p(\mb r_t) & = c_3 {\rm exp} \{   -\frac{1}{2} [  \mb r_t^T  \hat{R}_{t}(\mb x_t, \mb u_t)^{-1}  \mb r_t  ] \},  
\end{align}
where $\hat{R}_{t}(\mb x_t, \mb u_t)$ is a covariance matrix. Using the proposed measurement scheme,
the covariance matrix is just a diagonal matrix with elements the variances shown in~Eq.~\eqref{eq:measu:gauss}.

\subsection{The Instantaneous Cost}

Let $\mb u_t \in (0, 1)^{L\times J}$ be the 
\emph{sampling matrix} arranged in a \emph{vector} form; the variable $u_t(\ell, j)$ specifies the sampling rate
at link $\ell$ for flow $j$ at  time $t$.  We \emph{define} the instantaneous cost  to be the estimation error at time $t$ as follows:
\begin{equation}
\label{eq:instcost1}
 c_t ({ \mb x_t}, \mb u_t) := {\rm {trace }} \Big(\E [ ( \mb z_t  - \mb x_t ) ( \mb z_t  - \mb x_t )^T]\Big),
\end{equation}
where $\mb z_t := \mb z_t (\mb u_t) $ is the vector of volume estimation for each flow $j = 1,2, \ldots J$ at time $t$ (see Eq.~\eqref{eq:partial:obs} and~\eqref{eq:weights}).

The instantaneous cost of~\eqref{eq:instcost1} can then be written as: 
\begin{equation}
\label{eq:instcost2}
c_t ({ \mb x_t}, \mb u_t) = \sum_{j=1}^J \frac{1}{\sum_{k\in \ell(j)} \frac{1} { v_t (k, j)}  }.
\end{equation}

\begin{proposition}
The instantaneous cost function shown in~\eqref{eq:instcost2} is concave in $\mb{u_t}$.
\end{proposition}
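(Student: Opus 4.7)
The plan is to exploit the separable structure of $c_t$. Since the sampling rate $u_t(\ell, j)$ appears only in the term of~\eqref{eq:instcost2} indexed by flow $j$, the different flows contribute functions of disjoint subsets of the decision variables; thus $c_t(\mathbf{x}_t, \mathbf{u}_t) = \sum_{j=1}^J c_j(\mathbf{u}^{(j)})$, where $c_j(\mathbf{u}^{(j)}) := 1/\sum_{k \in \ell(j)} 1/v_t(k,j)$ and $\mathbf{u}^{(j)} := (u_t(\ell, j))_{\ell \in \ell(j)}$. Concavity of a separable sum follows from concavity of each summand, so the task reduces to proving that each $c_j$ is concave in its own arguments.

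Fixing $j$ and substituting $v_t(k,j) = x_t(j)(1-u_k)/u_k$ from~\eqref{eq:var_ij} (with $u_k := u_t(k,j)$) yields the compact form
$$c_j(\mathbf{u}^{(j)}) = \frac{x_t(j)}{\sum_{k \in \ell(j)} g(u_k)}, \qquad g(u) := \frac{u}{1-u}.$$
Since $x_t(j) > 0$ is a constant with respect to $\mathbf{u}_t$, it suffices to show that $\mathbf{u}^{(j)} \mapsto 1/\sum_k g(u_k)$ is concave on $(0,1)^{|\ell(j)|}$.

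The cleanest route is a direct Hessian computation. With $S := \sum_k g(u_k)$, differentiation gives, for an arbitrary direction $\xi$,
$$\xi^T H \xi = \frac{2 x_t(j)}{S^3}\Bigl(\sum_k \xi_k g'(u_k)\Bigr)^2 - \frac{x_t(j)}{S^2}\sum_k \xi_k^2 g''(u_k).$$
Establishing $\xi^T H \xi \leq 0$ for every $\xi$ reduces to the scalar inequality $2(\sum_k \xi_k g'(u_k))^2 \leq S \sum_k \xi_k^2 g''(u_k)$. A Cauchy-Schwarz splitting $\xi_k g'(u_k) = [\xi_k \sqrt{g''(u_k)}]\cdot [g'(u_k)/\sqrt{g''(u_k)}]$ reduces the target further to the $\xi$-free bound $\sum_k (g'(u_k))^2/g''(u_k) \leq S/2$; plugging in $g'(u) = (1-u)^{-2}$ and $g''(u) = 2(1-u)^{-3}$ shows this is equivalent to the comparison $\sum_k (1-u_k) \leq \sum_k u_k/(1-u_k)$.

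The main obstacle is exactly this last comparison: it is not a free byproduct of $u_k \in (0,1)$, and in fact tightens as the sampling rates shrink, so the argument will need to invoke either the practical low-rate regime in which the authors operate or additional structure from the sampling-capacity constraints to pin down the relevant feasible region. A parallel route I would attempt is the change of variables $s_k := 1/(1-u_k)$, under which $g$ becomes affine and $c_j = x_t(j)/(\sum_k s_k - |\ell(j)|)$; one can then try to combine concavity of the harmonic-mean-type map $(v_1, \ldots, v_n) \mapsto 1/\sum_i 1/v_i$ with the monotone chain $u_k \mapsto s_k$ to reassemble concavity of $c_j$ in $\mathbf{u}^{(j)}$. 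Pinning down this final step is what separates a heuristic sketch from a genuine proof.
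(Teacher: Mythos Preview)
Your Hessian computation is correct, and it is in fact more honest than the paper's argument. The paper's proof invokes the composition rule ``concave non-decreasing composed with convex is concave,'' which is simply not a valid rule (the correct concavity-preserving compositions are concave non-decreasing with \emph{concave}, or concave non-increasing with convex). So the paper's proof does not establish the proposition.

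The gap you flag in your last paragraph is decisive, but you draw the wrong moral from it. The inequality $\sum_k (1-u_k) \le \sum_k u_k/(1-u_k)$ does not merely ``tighten'' at low rates; it fails there outright. With all $u_k = 0.1$ the left side is $0.9\,|\ell(j)|$ and the right side is $|\ell(j)|/9$, so the inequality is violated, and hence the Hessian has a positive direction. The low-rate regime the authors work in is precisely where the claim breaks, not where it can be rescued. The simplest witness is the one-link case $|\ell(j)|=1$, where $c_j(u) = x_t(j)(1/u - 1)$ has second derivative $2x_t(j)/u^3 > 0$ and is therefore strictly convex on $(0,1)$, directly contradicting concavity.

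In short, the proposition as stated is false; the paper's composition argument misapplies the rule, and your Hessian route cannot close because the target inequality is not true. The change of variables $s_k = 1/(1-u_k)$ you suggest at the end will not save it either: $c_j = x_t(j)/(\sum_k s_k - |\ell(j)|)$ is convex in $s$, and $s_k$ is a convex (not affine) function of $u_k$, so no concavity is recoverable. What your analysis actually shows is that $c_j$ is neither convex nor concave in the multi-link case (take $\xi = (1,-1,0,\dots)$ to get a negative direction as well), and strictly convex in the single-link case.
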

\begin{proof}
 Using the expanded form of Eq.~\eqref{eq:instcost2}  with $v_t (k, j) = x_t(j) (\frac{1}{u_t(k, j)} - 1)$ we observe that this
resembles the harmonic average of the terms $v_t (k, j)$. Using the fact that the harmonic average function is a concave and non-decreasing function~\cite{boyd:convex}, one
can easily verify that our objective function is concave in $\mb{u_t}$ as a composition of a concave and non-decreasing function with a convex function.
\end{proof}





\section{Optimal Sampling}
\label{sec:solution}

The problem at hand belongs to the category of measurement adaptive problems~\cite{1450418}.
In the general case,  the problem of optimal measurement control (see also sequential design of experiments~\cite{Robbins1952})
can be formulated as the following \emph{discrete-time}, \emph{finite-horizon}, \emph{partially-observable},
\emph{perfect-recall} stochastic control problem (see~\cite{1450418, Kumar:1986:SSE:40665}). 

We are given, the \emph{system evolution} equation, written as
\begin{equation}
\label{eq:system:evol}
 \mb x_{t+1} =  f_t ( \mb x_t, \mb w_t), t=0,1,2, \ldots, T, 
\end{equation}
the \emph{measurement equation}
\begin{equation}
\label{eq:measurement:eq}
 \mb z_{t} =  h_t ( \mb x_t, \mb u_t, \mb r_t), t=0,1,2, \ldots, T, 
\end{equation}
and the probability densities for the \emph{random variables} $\mb x_0$, $\mb w_t$ and $\mb r_t$
\begin{equation}
\label{eq:primitive}
  p(\mb x_0), p(\mb w_t), p(\mb r_t),  t=0,1,2, \ldots, T.
\end{equation}
The performance criterion is the expected cost over the horizon of interest
\begin{equation}
\label{eq:criterion}
V = \E\{  \sum_{t=0}^{T-1} c_t (\mb x_t, \mb u_t) + c_T (\mb x_T, \mb u_T) \},
\end{equation}
where $c_t (\mb x_t, \mb u_t)$ is the \emph{instantaneous cost} function and the expectation is taken with respect to the random variables $\mb x_t$. 

The problem is to find the \emph{optimal sampling strategy} 
$$g:=(g_1(\mb Z^1), \ldots, g_t(\mb Z^t), \ldots, g_T(\mb Z^T))$$ 
that minimizes the expected cost~\eqref{eq:criterion} over the horizon of interest subject to ``budgetary'' sampling constraints. The symbol
\begin{equation}
\label{eq:symbolhist}
\mb Z^t := ( \mb z_1, \mb z_2, \dots, \mb z_t ),
\end{equation}
represents the \emph{history} of observations up to time $t$. Simi-larly, the history of sampling rates up to time $t$ will be denoted as $\mb u^t$. As mentioned above, we assume a system with \emph{perfect-recall}
which means that all this information is available. Having calculated an optimal sampling strategy, the optimal sampling action
at time instance $t$ would be
\begin{equation}
\mb u_t = g_t(\mb Z^t).
\end{equation}
In other words, given the history of observations, the optimal action would at time $t$ will be given by the \emph{pre-calculated} optimal policy.

In the general case, applying dynamic programming
is hindered by the \emph{curse of dimensionality}~\cite{powellbook}.
Therefore, some sort of approximation techniques need to be involved~\cite{powellbook, Bertsekas:1996:NP:560669}.
Indeed, under the following conditions, the stochastic control problem can be solved efficiently~\cite{1098668} 
by exploiting the \emph{two-way separation} between estimation and control~\cite{1450418, Kumar:1986:SSE:40665}. The conditions\footnote{The models presented in~\cite{1098668, 1450418, Kumar:1986:SSE:40665}  cover more general cases than the one presented here. Specifically, in the general model the state of the system needs to be controlled as well, and a quadratic cost is associated with the system state.
Further, a quadratic cost in the decision variables may also exist.} are:
a) The system evolution equation (see Eq.~\eqref{eq:system:evol}) is linear;
b) The measurement equation (see Eq.~\eqref{eq:measurement:eq}) is linear in the state and measurement noise;
c) The primitive random variables are Gaussian;
and d)The instantaneous cost (in our case given by Eq.~\eqref{eq:instcost1}) is independent of the state $\mb x_t$.

In the special case we have a state equation of the following form
\begin{equation}
\mb x_{t+1} = F_t \mb x_t + \mb w_t,
\end{equation}
a measurement equation of this form
\begin{equation}
\mb z_{t} = H_t (\mb u_t) \mb x_t + \mb r_t,
\end{equation}
where $H_t(\mb u_t)$ relates the measurement matrix with the measurement control.
The probability density functions of the primitive random variables are
\begin{align}
 p(\mb x_0) & = c_1 {\rm exp} \{   -\frac{1}{2} [  (\mb x_0 - \mb {\bar x_0} )^T ( \hat{P}_{0|-1}  )^{-1} (\mb x_0 - \mb {\bar x_0} )  ]  \}  \\
 p(\mb w_t) & = c_2 {\rm exp} \{   -\frac{1}{2} [  \mb w_t^T  \hat{Q}_{t}^{-1}  \mb w_t  ] \}.  \\
  p(\mb r_t) & = c_3 {\rm exp} \{   -\frac{1}{2} [  \mb r_t^T  \hat{R}_{t}( \mb u_t)^{-1}  \mb r_t  ] \},  
\end{align}
where $ \hat{R}_{t}( \mb u_t)$ gives the relationship between measurement noise and sampling rate.
The  performance criterion is
\begin{equation}
 V = \E \{  \sum_{t=0}^T  c_t(\mb u_t)  \} =  \sum_{t=0}^T  c_t(\mb u_t)  
\end{equation}
subject to constraints on $\mb u_t$.

Given the above conditions and the sampling rates $\mb u_t$, traffic volume estimation can be performed
with the \emph{Kalman filter}~\cite{citeulike:347166}. $\hat{\mb x}_{t|t}$, the optimal estimate conditioned on $\mb Z^t$, is given by
\begin{equation}
\label{eq:kalman}
\hat{\mb x}_{t|t} = F_{t-1} \hat{\mb x}_{t-1|t-1} + \hat{K}_t [ \mb z_t - H_t F_{t-1} \hat{\mb x}_{t-1|t-1}  ], 
\end{equation} 
where $ \hat{K}_t $, the Kalman gain, is
\begin{equation}
 \label{eq:kalmangain}
  \hat{K}_t  = \hat{P}_{t|t-1} H_t^T ( H_t \hat{P}_{t|t-1}H_t^T + \hat{R}_t )^{-1}
\end{equation}
and $\hat{P}_{t|t}$, the conditional covariance of the error in the estimate of $\mb x_t$ given $\mb Z^t$ can be calculated recursively for $t=0,1, \ldots, T$ by
\begin{align}
\label{eq:kalmangain2}
 &\hat{P}_{t|t} = \hat{P}_{t|t-1} - \hat{P}_{t|t-1} H_t^T ( H_t \hat{P}_{t|t-1}H_t^T + \hat{R}_t )^{-1} H_t \hat{P}_{t|t-1} \notag\\
 & \hat{P}_{t|t-1} = \hat{Q}_{t-1} + F_{t-1} \hat{P}_{t-1|t-1}F_{t-1}^T.
\end{align}

The computation of the optimal sampling rates can be determined \emph{a priori} by calculating the solution
of the following nonlinear, deterministic control problem:
\begin{equation}
\label{eq:opt}
V^* = \min_{\mb u_t} \sum_{t=0}^T c_t( \hat{\mb x}_{t|t}, \mb u_t),
\end{equation}
subject to the ``budgetary" sampling constraints. $ \hat{\mb x}_{t|t}$ is the ``best" state estimator available
at the time the optimization problem is solved.

 The optimization problem~\eqref{eq:opt} 
 can be decomposed into a sequence of problems. For time $t$, and given the state estimation  $\hat{\mb x}_{t|t}$ we have:
\begin{align}
\label{eq:optproblem}
 \mb{u}_t^* \in &\argmin_{u_t{(\ell, j)} } c_t ({  \hat{\mb x}_{t|t}}, \mb u_t) \\
 &  {\text{ s.t. }} B\mb{u_t} \leq \mb{d} \notag,
\end{align}
where $\mb{u_t}$ is the vector of sampling rates, $B\mb{u}_t \leq \mb{d}$ represent linear ``budgetary'' constraints
per link, and B is a matrix of appropriate dimensions (deduced from the routing matrix R). 
The concavity of our objective function, along with the linearity of our constraints lead us to a \emph{minimization}
of a concave function. This is an NP-complete problem, known as \emph{global concave minimization}.

The solution of the concave program always lies on the \emph{vertices} of the convex hull defined
by the convex polyhedron of our linear ``budgetary'' inequalities shown in Eq.~\eqref{eq:optproblem}. 
The proof can be found in~\cite{pardalos:global}. 
The above proposition suggests that one way to solve our concave program -- but certainly not the most efficient one -- is to enumerate all the vertices
of the induced convex hull, and pick the one that yields the lowest error. 
More sophisticated methods for solving concave programs can be found in~\cite{pardalos:global, tuy1980concave}. 
The complete traffic estimation algorithm is presented next.

\begin{heuristic}[Optimal Sampling]
\label{alg:kalman}
$$\newline$$
\begin{enumerate}

\item For $t=1, \ldots, t_0$ collect traffic data to calibrate the model; i.e., find $\mb x_0$, $\hat{P}_{0|-1}$, $F$ and ${\hat{Q}}$.

\item For $t=t_0$, set $ \hat{\mb x}_{t|t}=  \mb x_0$, and solve~\eqref{eq:opt}. We have now obtained $\mb u_t$, for $t=t_0, \ldots, T$.
 
\item Using the optimal sampling rates of Step 2) and Eqs.~\eqref{eq:kalmangain} and~\eqref{eq:kalmangain2} calculate the Kalman gain.

\item Using Eq.~\eqref{eq:partial:obs} get the combined observation for each flow $j$.

\item With the observations acquired from~\eqref{eq:partial:obs}, use the Kalman filter~\eqref{eq:kalman} to obtain the traffic volume estimation for time $t$, given the past of observations.

\item Set $t=t+1$ and go to Step 3). Repeat until $t=T$.

\end{enumerate}
 
\end{heuristic}

\section{Performance Evaluation}
\label{sec:peva:sampling}


\begin{figure*}[tp]
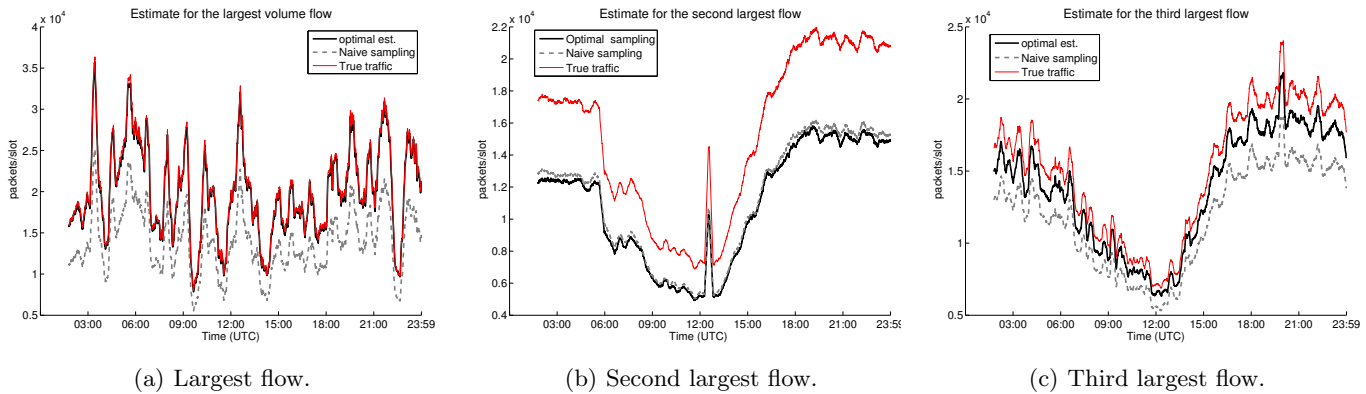

\vspace{-35pt}
  \begin{center}
   \hspace{-55pt}
    \subfigure[Largest flow.]{\label{fig:opt_vs_naive_flow19}\includegraphics[scale=0.30]{./estimates_largest_flow_19v2}}
    \hspace{-15pt}
    \subfigure[Second largest flow.]{\label{fig:opt_vs_naive_flow66}\includegraphics[scale=0.30]{./estimates_2ndlargest_flow_66v2}} 
    \hspace{-20pt}
    \subfigure[Third largest flow.]{\label{fig:opt_vs_naive_flow30}\includegraphics[scale=0.30]{./estimates_3rdlargest_flow_30v2}} 
  \end{center}
  \vspace{-10pt}
  \caption{Traffic estimation for different flows.}
  \label{fig:flow_estimation}
\end{figure*}

We use a real-world network, namely Internet2, to evaluate our algorithm.
We juxtapose our method against a na\"ive sampling scheme (i.e., sampling rates not chosen optimally; Kalman filtering is still used though).
Internet2   involves $L=26$ links, $N=9$ nodes and $J=72$ routes 
(see~\cite{stilianjoeltraffic,I2}). 
In particular, we employ a dataset for traffic captured on March 17, 2009.
The dataset includes the traffic volume of the $72$ flows, and the routing matrix $R$ (see Eq.~\eqref{e:routing}) which gives the path that each flow 
traverses in the network\footnote{All datasets used can be provided by the authors upon request.}.
In all examples that follow, a training window of $500$ time slots was applied to calibrate our model (see Step 1 of Algorithm~\ref{alg:kalman}).

In the na\"ive sampling scheme we evenly split the available sampling capacity among the competing flows of a link.
We assume that the sampling capacity for each of the $26$ network links is $0.20$. Figure~\ref{fig:opt_vs_naive1} shows the 
empirical root mean squared error (RMSE) for the whole network on the day of interest. RMSE is  defined as,
$
RMSE(t)=\sqrt{\sum_{j\in\mathcal{J}} ( \hat{x}_j(t|t) - x_j(t))^2/|J|}.
$
The RMSE time average for the optimal sampling scheme is $1548$ packets per time slot, and $1773$ packets per time slot for the na\"ive one.  This corresponds to a $13\%$ error reduction.

\begin{figure}[tp]
    \centering
    \includegraphics[scale=0.31]{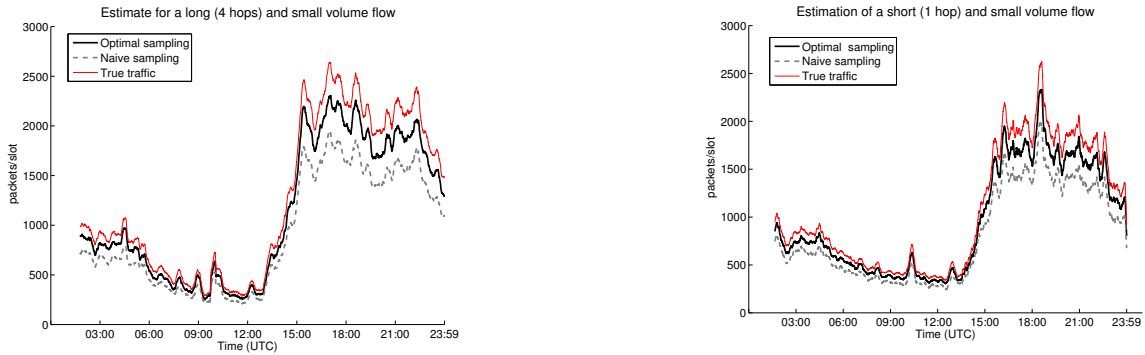}
    \vspace{-10pt}
   \caption{Long flow (4 hops), low  volume.}
   \label{fig:opt_vs_naive_flow14}
   \vspace{-10pt}
\end{figure}

We also examine traffic estimation for individual flows. Figures~\ref{fig:opt_vs_naive_flow19} -- \ref{fig:opt_vs_naive_flow30} present the cases
for the three largest flows in terms of average traffic volume size, namely flows $19$, $66$ and $30$.  
Moreover, Figure~\ref{fig:opt_vs_naive_flow14} illustrates the estimation outcome for flow $14$, which is a ``long" flow traversing $4$ links, but with relatively
low traffic volume. Similarly,  Figure~\ref{fig:opt_vs_naive_flow68}, depicts the results for flow $68$, a ``short" flow with low traffic volume.
Clearly, the proposed approached is advantageous over the simplistic sampling scheme.

The results indicate the performance gains of our sampling scheme, being a result of
considering  both temporal and spatial correlation between flows. A necessary requirement, though, is
the \emph{stationarity} of traffic volumes.  This does not always hold for Internet traffic. Ongoing work
includes investigation of ``richer" stochastic models, something that would allow sampling designs
with even stricter sampling constraints (e.g., $1:100$ or even $1:1000$). Furthermore,
one can additionally re-calibrate the model and ``learn" its new parameters by increasing the frequency of the training periods (step 1 of Algorithm~\ref{alg:kalman}).

\begin{figure}[tp]
    \centering
    \includegraphics[scale=0.31]{./estimates_short_small_flow_68v2}
        \vspace{-10pt}
    \caption{Short flow (1 hop), low volume.}
    \label{fig:opt_vs_naive_flow68}
        \vspace{-10pt}
\end{figure}

%
\bibliographystyle{abbrv}
\bibliography{sampling}  
%
%

\balancecolumns
\end{document}